\documentclass[11pt,amssymb]{article}

\usepackage{amssymb}
\usepackage{epsfig}
\usepackage{fullpage}

\newcommand{\comment}[1]{}

\newtheorem{theorem}{Theorem}
\newtheorem{definition}{Definition}
\newtheorem{lemma}[theorem]{Lemma}

\newtheorem{corollary}{Corollary}

\def\squareforqed{\hbox{\rlap{$\sqcap$}$\sqcup$}}

\newtheorem{example}{Example}

\newcommand{\qed}{\vrule height6pt width4pt\medskip}

\newcommand{\xor}{\oplus}
\newcommand{\QED}{\hfill\qed}
\def\squareforqed{\hbox{\rlap{$\sqcap$}$\sqcup$}}
\def\qed{\ifmmode\squareforqed\else{\unskip\nobreak\hfil
\def\mod{\ {\rm mod}\ }
\penalty50\hskip1em\null\nobreak\hfil\squareforqed
\parfillskip=0pt\finalhyphendemerits=0\endgraf}\fi}
%\parindent 0pt
%\parskip .05in

%\font\tenmsb=msbm10
%\font\sevenmsb=msbm7
%\font\fivemsb=msbm5

%\newfam\msbfam
%\textfont\msbfam=\tenmsb
%\scriptfont\msbfam=\sevenmsb
%\scriptscriptfont\msbfam=\fivemsb

\newcommand{\F}{{\Bbb F}}
\newcommand{\N}{{\mathbb N}}

\begin{document}
\title{Polynomials: a New Tool for Length Reduction in Binary Discrete
Convolutions}

\author{
\begin{tabular}{cccc}
Amihood Amir\thanks{ Department of Computer Science, Bar-Ilan
University, Ramat-Gan 52900, Israel, +972 3 531-8770; {\tt
amir@cs.biu.ac.il}; and Department of Computer Science, Johns Hopkins
University, Baltimore, MD 21218. Partly supported by NSF grant
CCR-09-04581 and ISF grant 347/09, and BSF grant 2008217.}
&
Oren Kapah\thanks{Department of Computer Science,
Bar-Ilan U., 52900 Ramat-Gan, Israel, (972-3)531-8408, {\tt
kapaho@cs.biu.ac.il}.}
&
Ely Porat\thanks{ Department of Computer Science, Bar-Ilan
University, 52900 Ramat-Gan, Israel, (972-3)531-7620; {\tt
porately@cs.biu.ac.il}.}
&
Amir Rothschild\thanks{ Department of
Computer Science, Bar-Ilan University, 52900 Ramat-Gan, Israel,
(972-3)531-7620; {\tt amirrot@gmail.com}.}
\\
{\small Bar-Ilan University} & {\small Bar-Ilan University}
& {\small   Bar-Ilan University} & {\small Bar-Ilan University}\\
{\small and} \\
{\small Johns Hopkins University} \\
\end{tabular}
}

\date{}

\maketitle
\begin{abstract}
Efficient handling of sparse data is a key challenge in Computer Science.
Binary convolutions, such as polynomial multiplication or the Walsh
Transform are a useful tool in many applications and are efficiently solved.

In the last decade, several problems required efficient solution of
sparse binary convolutions. Both randomized and deterministic
algorithms were developed for efficiently computing the sparse
polynomial multiplication.
The key operation in all these algorithms was length reduction. The
sparse data is mapped into small vectors that preserve the convolution
result. The reduction method used to-date was the modulo function
since it preserves location (of the "1" bits) up to cyclic shift.

To date there is no known efficient algorithm for computing the sparse
Walsh Transform. Since the modulo function does not preserve the Walsh
transform a new method for length reduction is needed.
In this paper we present such a new method -– polynomials.
This method enables the development of an efficient algorithm for
computing the binary sparse Walsh Transform. To our knowledge, this is
the first such algorithm. We also show that this method allows a
faster deterministic computation of the sparse polynomial
multiplication than currently known in the literature.  

\end{abstract}

\section{Introduction\label{sec:intro}}

Handling sparse data is one of the grails of Computer Science, and is
a challenge in many of its sub-fields, from Machine Learning and Data
Bases to Combinatorics and Statistics. Even in one dimension sparse
data needs to be handled, and it is a tool for solving a number of
problems. An example is the {\em point-set matching problem}, where
two sets of points $T,P \in \N^d$ consisting of $n,m$ points,
respectively, are given. The goal is to determine if there is a rigid
transformation under which all the points in $P$ are covered with
points in $T$. Among the important application domains to which this
problem contributes are model based object
recognition~\cite{rucklidge:96,cyc:02}, pharmacophore
identification~\cite{kavraki:02}, searching in music archives,
~\cite{ulm:03}, and more.

The point-set matching problem has been studied in the literature in
many variation, not the least of which in the algorithms literature
(see e.g.~\cite{sc:98Sch,imv:99,CH:02,akof:03,acg:00,ahsw:03,ds:11}).
The template of many of these algorithms is as follows: First do a
{\it Dimension Reduction} where the inputs $T,P$ are linearized into
raw vectors $T',P'$ of size polynomial in the number of non-zero values.
reduced the $d$-Dimensional point set matching problem into a
$1$-Dimensional point set matching problem, and then solve the one
dimensional problem. This problem is also known as the {\em Sparse
Convolution Problem} 
%or {\em Sparse Discrete Fourier Transform (DFT) Problem}. 
We define it formally: 

\begin{definition}
Let $T:\{ 0,..., N-1\}\rightarrow \{0,1\}$ and $P:\{ 0,..., M-1
\}\rightarrow \{0,1 \}$ be binary functions. We may view $T$ and
$P$ as vectors of bits, whose lengths are $N$ and $M$, respectively.

{\sl INPUT:} Binary vectors $T$ and $P$ of length $N$ and $M$,
respectively.\\
{\sl OUTPUT:} All indices $i$ where, for all $j,\ \ j=0,...,M-1$,
either $T[i+j]=P[j]$ or $T[i+j=1$ and $P[j]=0$. We call such an index
  a {\em match}.
\end{definition}

{\bf Intuition:} If we consider a $1$ as depicting a point on the line
and a $0$ as having no point there, then the meaning of the problem is
to find all locations where every point of $P$ matches some point of $T$.

\begin{example}
$T=00000100101100011001010101110000000100$, and
$P=1000101$. There is a match in location $15$ because the situation
is:\\
\hskip 1in $T= ...\ 1\ 1\ 0\ 0\ 1\ 0\ 1\ 0\ ...$\\
\hskip 1in $P= ...\ 1\ 0\ 0\ 0\ 1\ 0 \ 1$\\
There are matches also at locations $19$ and $21$.
\end{example}

The above problem can be trivially solved in time $O(NM)$. It can be
solved in time $O(N\log M)$, in a computational model with word size
$O(\log M)$, using the Fast Fourier Transform (FFT)~\cite{CLR-92}.

Polynomial multiplication is a special case of a general convolution. 
A general convolution uses two initial functions, $v_1$ and $v_2$, to
produce a third function $w$. We consider a subclass of discrete
convolutions which we call {\em discrete dot-product convolutions}.

\begin{definition}\label{d:dot}
Let $v_1=v_1[0],...,v_1[N-1];v_2=v_2[0],...,v_2[N-1]$ be vectors in
$\mathbb{R}^N$. The {\em dot product} of $v_1$ and $v_2$, denoted as
$v_1\cdot v_s$, is: 
$$ v_1\cdot v_2 = \sum_{i=0}^{N-1} v_1[i]v_2[i].$$
\end{definition}

A discrete dot product convolution, operates $O(N)$ bijections on the
indices of $v_1$ and computes the dot product. Formally:

\begin{definition}\label{d:ddpt}
Let $v_1=v_1,v_2\in \mathbb{R}^N, c\in \mathbb{N}$. Let $B=\{
\beta_0,...,\beta_{cN-1} \}$ be bijections from $\{0,...,N-1$ to
$\{0,...,N-1\}$.
Then the {\em discrete dot-product B convolution of $v_1$ and $v_2$},
denoted as $v_1 \otimes_B v_2$, is the vector of length $cN$ defined as:
$$ v_1 \otimes_B v_2[j] = \sum_{i=0}^{N-1}
v_1[\beta(i)]v_2[i],\ \ j=0,...,cN-1.$$ 
\end{definition}

The polynomial multiplication we have seen above, as well as the
Discrete Walsh Transform (used e.g. in~\cite{aalp07} for computing
matching with flipped bits), are special cases of discrete dot-product
convolutions. Formally:

\begin{definition}\label{d:2dts}
\begin{enumerate}
\item The {\em Discrete Walsh Transform (DWT)} of $v_1,v_2\in
\mathbb{R}^N$ is the {\em discrete dot-product B convolution of $v_1$
and $v_2$} where $B=\{\beta_0,...,\beta_{N-1} \}$  and
$\beta_j:\{0,...,N-1\} \rightarrow \{0,...,N-1\}$ is defined as:
$\beta_j(i) = i \oplus j$. $i$ and $j$ are the binary number
representations of $i$ and $j$ (i.e. binary strings of length $\log N$)
and $\oplus$ is the {\em exclusive or} operation.
\item The {\em polynomial multiplication} of $v_1\in
\mathbb{R}^N, v_2\in \mathbb{R}^M$ is the {\em discrete dot-product B
convolution of $v_1$ and $v_2$} where $B=\{\beta_0,...,\beta_{N-1} \}$  and
$\beta_j:\{0,...,M-1\} \rightarrow \{j,...,j+M-1\}$ is defined as:
$\beta_j(i) = i + j$. In this definition the dot product is of vectors
  in $\mathbb{R}^M$, i.e. $$ v_1 \otimes v_2[j] = \sum_{i=0}^{M-1}
v_1[i+j]v_2[i],\ \ j=0,...,N-M+1.$$  
\end{enumerate}
\end{definition}
%We formally define a discrete
%convolution.
%\begin{definition}\label{d:fft}
%Let $T$ be a function whose domain is $\{ 0,..., N-1\}$ and $P$
%a function whose domain is $\{ 0,..., M-1 \}$. We view $T$ and
%$P$ as vectors of numbers, whose lengths are $N$ and $M$,
%respectively. The {\em discrete convolution of $T$ and $P$} is
%$$ W[j] = \sum_{i=0}^{M-1} T[j+i] P[i],\ \ \ \ j= 0,...,N-M-1$$
%\end{definition}
%
%Another useful convolution is the {\em Discrete Walsh Transform (DWT)},
%formally defined:
%
%\begin{definition}\label{d:walsh}
%Let $T,P$ be functions whose domain is $\{ 0,..., M-1\}$. The {\em
%  discrete Walsh transform of $T$ and $P$} is
%$$ W[j] = \sum_{i=0}^{M-1} T[j\oplus i] P[i],\ \ \ \ j= 0,...,M-1,$$
%where $\oplus$ is the {\em exclusive or} operation.
%\end{definition}

A special feature both these convolutions have is that they can be
solved in time $O(N\log N)$ via a divide-and-conquer approach. The
algorithm for the polynomial multiplication uses the FFT and the
algorithm for the Discrete Walsh Transform is the {\em Fast Walsh
Transform (FWT)}. Of course, similar to the DFT case, one may also 
consider a binary version of the DWT. In both
these cases, the $O(N\log N)$ divide-and-conquer algorithm is a
wonderful solution if the input vectors have many points. Suppose,
however, that our arrays are {\em sparse}. In the sparse case, many
values of $T$ and $P$ are $0$. Thus, they do not contribute to the
convolution value. 

{\bf Convention:} Throughout this paper, a capital letter (e.g. $N$)
is used to denote the size of the vector, which is equivalent to the
largest index of a non-zero value, and a small letter (e.g. $n$) is
used  to denote the number of non-zero values. It is assumed that the
vectors are not given explicitly, rather they are given as a set of
$(index,value)$ pairs, for all the non-zero values.

In our convention, the number of non-zero values of $T (P)$ is $n
(m)$. Clearly, we can compute the convolution either in time $O(N\log
M)$ or in time $O(nm)$. The challenge was (see e.g.~\cite{muthu-open})
whether the convolution can be computed in time $o(nm)$.

To our knowledge, the only efficient algorithms for computing sparse
discrete binary convolutions, are for the FFT. The state-of-the art in
computing such sparse convolutions is to use a locality preserving
function to {\em reduce the length} of the sparse vectors, and then
to use the fast convolution algorithm. In Section~\ref{s:pre} we 
summarize the details of current length reduction methods. The
locality preserving function is the {\sl modulo} function.

{\bf Paper's Contribution:} The main contribution of this paper is
introducing a {\em novel tool for length reduction} -- {\sl
  polynomials}.
We show a number of advantages to using polynomials:
\begin{enumerate}
\item The new polynomials technique leads to an elegant algorithm for
  the sparse DWT. To our knowledge, no such algorithm is known to-date.
\item Our technique can be used without preprocessing to achieve a
  Las-Vegas algorithm for the sparse FFT which runs in time $O(n\log^2
  n)$. This matches the expected running time of the Las-Vegas
  algorithm presented by Cole and Hariharan in~\cite{CH:02}. However,
  our algorithm has the added advantage of guaranteeing a worst case
  time of $O(n^2\log n)$ whereas the algorithm in~\cite{CH:02} has no
  bound on its worst case running time.
\item Allowing $O(n^2)$ preprocessing time on the text, our technique
  can then {\em deterministically} solve the sparse FFT for incoming
  patterns in time $O(n \log^2 n)$. This improves the best known
  deterministic solution to the problem (\cite{LR07}), whose time was
  $O(n \log^3 n)$, with the same preprocessing time.
\end{enumerate}

 \section{Summary of Length Reduction Methods}\label{s:pre}

Recently~\cite{hikpa:12,hikpb:12}, Hassanjeh et
al. considered algorithms for approximating the sparse Fourier
transform. We are interested in exactly computing sparse convolutions 
(the Walsh transform, as well as the fast Fourier transform-based
polynomial multiplication). In this section we summarize Cole and
Hariharan's {\em length reduction} method, which exactly computes
sparse polynomial multiplication. in order to pinpoint the role of the
locality preserving function. 

Suppose we can map all the non-zero values into a smaller vector, say
of size $O(n \log m)$. Suppose also that this mapping is alignment
preserving in the sense that applying the same transformation on $P$
will guarantee that the alignments are preserved. Then we can simply
map the the vectors $T$ and $P$ into the smaller vectors and then
use FFT for the convolutions on the smaller vectors, achieving time
$O(n \log^2 m)$. We then map the results back to the original vectors.

The problem is that to-date there is no known mapping with that
alignment preserving property. Cole and Hariharan~\cite{CH:02}
suggested using the modulo function as the alignment preserving
mapping, in a randomized setting that answers the problem with high
probability. The reason their algorithm is not deterministic is the
following:
In their length reduction phase, several indices of
non-zero values in the original vector may be mapped into the same
index in the reduced size vector. If the index of only one non-zero
value is mapped into an index in the reduced size vector, then this
index is denoted as {\it singleton} and the non-zero value is said
to appear as a {\it singleton}. If more then one non-zero value is
mapped into the same index in the reduced size vector, then this
index is denoted as {\it multiple}. The multiple case is problematic
since we can not be sure of the right alignment. The proposed solution
of Cole and Hariharan was to create a set of $\log n$ pairs of vectors
using $\log n$ hash function rather then a single pair of vectors.
They showed that in $O(\log n)$ attempts, the probability that some
index will {\em always} be in a multiple situation is small.

A different, deterministic, solution was shown in~\cite{LR07}. The
idea was to find $\log n$ hash functions that reduce the size of the
vectors to $O(n \log n)$. The algorithm guaranteed that each
non-zero value appears with no collisions in {\em at least} one of the
vectors, thus eliminating the possibility of en error.

The ultimate goal of the length reduction is as follows: Given two
vectors $T,P$ whose sizes are $N,M$, with $n,m$ non-zero
elements respectively (where $n>m)$, obtain two vectors
$T',P'$ of size $O(n)$ such that all the non-zero elements in $T$ and 
in $P$ will appear as singletons in $T'$ and in $P'$
respectively while maintaining the distance property.

The {\em distance property} which need to be maintained is defined as
follows: If $P'[f(0)]$ is aligned with $T'[f(i)]$, then
$P'[f(j)]$ will be aligned with $T'[f(i+j)]$.

This goal was not reached yet, rather a set of $O(\log n)$ vectors
of size $O(n \log n)$ were obtained in ~\cite{LR07}, where each
non-zero in the text appears at least once as a singleton in the set
of vectors. This length reduction gave an $O(n \log^3 n)$
algorithm for convolution in sparse data with a preprocessing time of
$O(n^2)$.

The general outline of this idea could, conceivably, work for the
sparse FWT case. However, the modulo function can not serve as a
locality preserving mapping for the Walsh transform, since the
convolution bijection is an exclusive or rather than a shift. Thus
to-date there is no known efficient algorithm for the sparse DWT.

\section{The New Length Reduction Technique for the FWT}\label{s:new_fwt}

We employ a length reduction algorithm that applies to any
convolution. The algorithm has four stages:
\begin{enumerate}
\item {\em Reduction}: A locality preserving function that reduces the
  range of indices.
\item {\em Convolution}: Perform the convolution on the shorter
  strings.
\item {\em Verification}: Verify that the results are consistent,
  i.e. that all multiplied elements indeed needed to be
  multiplied. 
\item {\em Solution Expansion}: Map the solution of the shorter
  strings, to the original longer strings.
\end{enumerate}

Below we explain the intuition for the reduction function chosen for
the DWT, and then give the details of how polynomials perform the
required length reduction.

Recall that in the case of FWT both the text and pattern have the same
length $N$. For ease of exposition we assume that $N=2^L$. Assume that
the text has $n$ non-zero values, and the pattern has $m$ non-zeros
values. Recall also that we denote the {\em exclusive or} operation by
$\xor$. As in the sparse FFT case, we want to check the value of the
FWT for all $i$ for which  $\sum_{j=0}^{n-1} T[i \xor j]P[j]$ has $m$
non-zero summands. As we mentioned above, the algorithm is the same as in the
case of the DFT. The problem is that here the 'modulo' function is no
help at all as a length reduction function. 

\subsection{Intuition}\label{ss:int_fwt}
Split the text to two strings of size
$2^{L-1}=\frac{N}{2}$ and merge them with the ``or'' operator. Due to
the fact that $N>>n$, with high probability there will be no
collisions, and thus all cases will be singletons. We do the same
thing to the pattern and then run the FWT on the smaller strings. The
multiples are handled similarly to the FFT case, i.e. ignore at this
stage and in another reduction they will be unlikely to collide. We
also need a verification stage to make sure that we have not
multiplied values from different substrings.

This argument works well if the inputs are generated under a uniform
random distribution. For supporting all inputs we need some
randomization tool. The simple way to do it is as follow: Let $T_1$
be the first half of $T$ and $T_2$ the second. Choose a random bit
string $mask$ of length ${L-1}$, set its most significant bit to be
$1$, and calculate $T_2^{mask}[i]=T_2[i \xor mask]$. Now do the `or'
operation between $T_1$ and $T_2^{mask}$. In effect, we have randomly
permuted $T_2$. The following lemma, which immediately follows from
the commutativity of the exclusive-or operation,  guarantees that this 
random permutation preserves the locality for the Walsh transform.

\begin{lemma}\label{l:local_walsh}
Pattern index $i$ matches text index $k$ for location $j$ of the
Walsh transform iff pattern index $i\xor mask$ matches text index $k
\xor mask$.
\end{lemma}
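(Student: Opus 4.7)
The plan is to unwrap the definition of ``matches'' under the Walsh transform and then verify the equivalence by a one-line XOR calculation. From Definition~\ref{d:2dts}, entry $j$ of the DWT is $\sum_{i} T[\beta_j(i)]\,P[i]$ with $\beta_j(i) = i \xor j$. Hence the statement ``pattern index $i$ matches text index $k$ at location $j$'' is just the equation $k = i \xor j$, i.e., $P[i]$ is being multiplied against $T[k]$ in the summand corresponding to output index $j$.

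Given this reformulation, the lemma reduces to the claim that $k = i \xor j$ iff $k \xor mask = (i \xor mask) \xor j$. I would prove this directly: assuming $k = i \xor j$, XOR both sides with $mask$ and use associativity and commutativity of $\xor$ to rewrite the right-hand side as $(i \xor mask) \xor j$, yielding the forward direction. For the converse, XOR both sides of $k \xor mask = (i \xor mask) \xor j$ with $mask$; since $mask \xor mask = 0$, we recover $k = i \xor j$. The equivalence is thus a two-line symbolic manipulation.

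There is essentially no obstacle here: the only subtlety is being careful to state which bijection $\beta_j$ governs ``matching'' and to note that masking commutes with $\beta_j$ (which is exactly the content of commutativity of $\xor$). I would then remark that this is precisely what makes $mask$ a legitimate randomization tool in Section~\ref{ss:int_fwt}: applying the same $mask$ to both $T$ and $P$ relabels indices without altering which pairs $(i,k)$ contribute to which output entry $j$, so the Walsh transform of the masked vectors agrees, entry by entry, with the transform of the originals. Consequently the length-reduction scheme that relies on $\xor$-ing halves of the text and pattern together with a random mask preserves the convolution structure.
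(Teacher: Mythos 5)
Your proposal is correct and matches the paper's (unstated) argument: the paper simply asserts the lemma ``immediately follows from the commutativity of the exclusive-or operation,'' and your reduction of ``matches'' to the equation $k = i \xor j$ followed by the two-line manipulation with associativity, commutativity, and $mask \xor mask = 0$ is exactly that argument written out. No gaps.
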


\begin{example}\label{ex:1} Consider the following text and pattern:

\begin{tabular}{|l|c|c|c|c|c|c|c|c|}
\hline
address & 000 & 001 & 010 & 011 & 100 & 101 & 110 & 111\\
\hline\hline
T & 0 & 1 & 0 & 0 & 0 & 0 & 1 & 0\\
\hline
P & 1 & 0 & 0 & 0 & 0 & 0 & 0 & 1\\
\hline
\end{tabular}

The Walsh transform is: $02000020$ (since at locations $001=1$ and
$110=6$ the $1$'s in the text and pattern match).

Consider now $mask = 101$. (The most significant bit (MSB) is always
$1$ at the mask because it is a mask on $T_2$, which is the second
half of the text). The smaller text and pattern are as below:

\begin{tabular}{|l|c|c|c|c|}
\hline
address & 000 & 001 & 010 & 011 \\
\hline\hline
T' & 0 & 1 & 0 & 1\\
\hline
P' & 1 & 0 & 1 & 0\\
\hline
\end{tabular}

The Walsh transform is $0202$. 

%The solution expansion stage needs to expand this solution to the
%original locations as follows: in original
%locations $01 = 1$ and $11 \xor 101=110=6$ there are matches. Below we
%discuss how can we tell which result gets xor-ed with the mask 
\end{example}

The above example lacks the verification and solution expansion
stages. We need to verify that the results are {\em
  consistent}. Consistency means that one of two cases happen:
\begin{enumerate}
\item Every text value with an unchanged address is multiplied by a
pattern value with an unchanged address, and every text value with a
masked address is multiplied by a pattern value with a masked address. 
\item Every text item with an unchanged address is multiplied by a
  pattern item with a masked address and every text item with a masked
  address is multiplied by a pattern item with an unchanged address.
\end{enumerate}

In addition it is necessary to be able to identify which of the non-zero
results belong to their location ($01$ in the example) and which should
be xor-ed with the mask ($11$ in the example).

{\bf Verification Idea:} In the reduction stage, construct two strings
of length $2^{L-1}$ where for each non-zero value write $s$ if it
is static, i.e., was not moved by the mask, and $m$ if it was moved by
the mask. We are interested in products where all values are either $s
\cdot s$, or $m \cdot m$, and in products where all values are
$m\cdot s$. This can be calculated by a constant number of
appropriate binary DWTs.

\begin{example}\label{ex:2}
In Example~\ref{ex:1} we have

\begin{tabular}{|l|c|c|c|c|}
\hline
address & 000 & 001 & 010 & 011 \\
\hline\hline
T' & 0 & s & 0 & m\\
\hline
P' & s & 0 & m & 0\\
\hline
\end{tabular}

Locations $00$ and $10$ have value $0$.
In location $01$ we have a product of $m\cdot m$ and $s\cdot s$, and
in location $11$ we have products of $m\cdot s$. Thus the reduction
gives a consistent result and, furthermore, it means that the result
in location $01$ remains in that location, and the result in location
$11$ belongs to the index xor-ed with the mask, i.e. $11 \xor 101
=110 = 6$.
 
\end{example}

The idea above reduces the length by a half, from $N$ to
$\frac{N}{2}$. It is easy to see that this process can be recursively
repeated. However, the analysis of the probability of collision is
then a bit tedious. Our polynomial framework for length reduction
enables this analysis in a clear and elegant manner.

\subsection{Polynomials over a Finite Field}\label{ss:poly_fwt}

We will consider our indices as elements in the finite field
$F_{2^L}$. The motivation for this is that $x+y=x \xor y$ in $F_{2^L}$.
Now in order to reduce the length of the string we will write each
element in $F_{2^L}$ as a polynomial in $F_{2^{\ell}}[X]$ of degree
$d=\frac{L}{\ell}-1$. In order to do that, let us look at the binary
presentation of a non-zero element index $i=a_{L-1}a_{L-2}...a_0$. The
index binary presentation is divided into set of $\ell$ bits starting
from the LSB. By this division we obtain a set of $\frac{L}{\ell}$
numbers in $F_{2^{\ell}}$. These numbers will be used as the
coefficients of the polynomial representing this index, thus this
polynomial belongs to $F_{2^{\ell}}[X]$ and its degree is
$d=\frac{L}{\ell}-1$. 

\begin{example} 

Consider the non-zero element index $i=17$. let us assume that
$\ell=2$ by breaking the binary presentation of $i$ into blocks of $2$
bits we obtain $i=17=10001=(01)*X^2+(00)*X+(01)*1=X^2+1$. 

\end{example}

Note that under $F_{2^{\ell}}[X]$, addition is calculated as XOR, and
multiplication is calculated as polynomial multiplication modulo some
irreducible polynomial. Below is an example of the multiplication
table of $F_{2^2}$ 

\begin{example} Multiplication table of $F_{2^2}$. 

\begin{tabular}{|l|c|c|c|c|}
\hline
 & 0 & 1 & a & b \\
\hline\hline
0 & 0 & 0 & 0 & 0\\
\hline
1 & 0 & 1 & a & b\\
\hline
a & 0 & a & b & 1\\
\hline
b & 0 & b & 1 & a\\
\hline
\end{tabular}

Note that $a$ and $b$ can be thought as representing $10$ and $11$ respectively.
\end{example}

After we obtain the polynomials representing
the indices of the non-zeros elements, we can obtain different reduced
size vector by applying different values to $X$ in the
polynomials. %However, unlike the solution to the FFT, in the FWT there
%is no need to double the coefficients since the locality is preserved
%by lemma~\ref{l:local_walsh}

To prove the correctness of this length reduction for FWT, we have to
prove that the following lemma still holds: 

\begin{lemma}\label{l:WalAllignment}
For any assignment of $X$, if $P[0]$ is aligned with the base
polynomial representing $T[i]$, then $P[j]$ will be aligned with
one of the polynomials representing $T[i \oplus j]$.
\end{lemma}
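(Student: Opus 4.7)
The plan is to derive the lemma from one structural fact: the polynomial encoding commutes with $\oplus$. First I would observe that the map $k \mapsto \hat{k}(X)$, which chops the $L$-bit representation of $k$ into $L/\ell$ blocks of $\ell$ bits and reads these blocks as coefficients of a polynomial in $F_{2^{\ell}}[X]$, is $F_2$-linear. Indeed, bitwise XOR acts blockwise on the bit string, and on each block of $\ell$ bits it coincides with addition in $F_{2^{\ell}}$, since that field has characteristic $2$. Hence $\widehat{k_1 \oplus k_2}(X) = \hat{k_1}(X) + \hat{k_2}(X)$ in $F_{2^{\ell}}[X]$.

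Next I would fix an assignment $X = \alpha$ (in whatever extension of $F_{2^{\ell}}$ is being used for the evaluation) and let $\rho(k) := \hat{k}(\alpha)$ denote the induced reduction. Because polynomial evaluation at a point is a ring homomorphism, and in particular additive, the identity above yields $\rho(k_1 \oplus k_2) = \rho(k_1) \oplus \rho(k_2)$. Thus $\rho$ is an $F_2$-linear map onto the smaller index set; this is the exact analogue, for the XOR convolution, of the shift-preserving property that $k \mapsto k \bmod p$ enjoys in the cyclic-convolution setting.

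Finally I would translate the lemma into this language. In the Walsh transform, pattern position $a$ is paired with text position $a \oplus s$ at shift $s$. The hypothesis that $P[0]$ is aligned with the base polynomial representing $T[i]$ after reduction forces the reduced shift to be $s = \rho(0) \oplus \rho(i) = \rho(i)$. Applying the same shift to the reduced pattern position $\rho(j)$ pairs it with the reduced text position $\rho(j) \oplus \rho(i) = \rho(i \oplus j)$, which is precisely the reduced image of $T[i \oplus j]$. This is the alignment the lemma asserts.

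I do not expect a serious obstacle: the statement is essentially a bookkeeping consequence of $F_2$-linearity of the encoding combined with the XOR-shift structure of the Walsh transform. The only delicate point is to insist that the evaluation takes place in a field of characteristic $2$ containing $F_{2^{\ell}}$, so that $\rho$ lands in an abelian group whose addition is XOR; once this is set up the argument reduces to a one-line calculation.
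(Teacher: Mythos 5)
Your proposal is correct and follows essentially the same route as the paper's proof: both rest on the single observation that the blockwise encoding is $F_2$-linear (XOR on $\ell$-bit blocks coincides with addition in $F_{2^{\ell}}$), so the polynomial representing $i \oplus j$ is exactly $P_i[X]+P_j[X]$, and hence the alignment is preserved under any evaluation. Your write-up is somewhat more careful than the paper's — it makes the linearity of the encoding and the additivity of evaluation explicit rather than implicit — but it is the same argument.
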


\begin{proof}

The proof is simple, and it is based on the building method. The
polynomial representing the index of $P[0]$ is $P_0[X]=0$ and it is
aligned with $P_i[X]$, which is the polynomial representing the index
$T[i]$, thus the polynomial representing $P[j]$ is $P_j[X]$, and it is
aligned with the polynomial $P_i[x]+P_j[x]$. Recall that the
coefficients of the polynomials $P_i[x]$ and $P_j[x]$ are the bits
representing $i$ and $j$, and the addition under $F_{2^\ell}$ is XOR,
thus we get that $P_i[x]+P_j[x]$ is the polynomial representing the
index $i \oplus j$. 

\end{proof}

In this framework, the analysis of the probability of collision is
easy. Two indices $i=p_i(X)$ and $j=p_j(X)$ can collide only on
$d=\frac{L}{\ell}-1$ different evaluations, as any collision implies
that the value assigned to $X$ in the polynomials is a root of the
difference polynomial representing those two indices. Since the degree
of this polynomial is bounded by $d$, then there can be no more then
$d$ different roots to this polynomial. We conclude: 

\begin{lemma}\label{l:prob_collision}
The probability of a collision when choosing a random evaluation $r$
is no greater than $\frac{d}{2^{\ell}}$.
\end{lemma}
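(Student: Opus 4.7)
The plan is to recognize that this is essentially the univariate Schwartz--Zippel / DeMillo--Lipton bound applied to the difference of the two polynomials. So I would not invent any new machinery; I would only have to verify that the hypotheses of that bound are met in our setting.

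First I would fix indices $i \neq j$ and their associated polynomials $p_i(X), p_j(X) \in F_{2^{\ell}}[X]$, each of degree at most $d = L/\ell - 1$. A collision at a chosen evaluation point $r \in F_{2^{\ell}}$ means, by definition, $p_i(r) = p_j(r)$, which is equivalent to $r$ being a root of the difference $q(X) = p_i(X) - p_j(X)$. Since addition in $F_{2^{\ell}}$ is XOR and the coefficients of $p_i, p_j$ are just the bit-blocks of $i$ and $j$, the polynomial $q$ has coefficients that are exactly the $\ell$-bit blocks of $i \xor j$; because $i \neq j$, at least one such block is nonzero, so $q$ is a nonzero polynomial in $F_{2^{\ell}}[X]$.

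Next I would invoke the standard fact that a nonzero polynomial of degree at most $d$ over a field has at most $d$ roots in that field. Applied to $q$, this yields at most $d$ ``bad'' evaluation points $r \in F_{2^{\ell}}$ that cause a collision between $i$ and $j$. Since $r$ is chosen uniformly from the $2^{\ell}$ elements of $F_{2^{\ell}}$, the probability of collision is at most $d / 2^{\ell}$, as claimed.

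There is no real obstacle here; the only thing worth being careful about is making sure the difference polynomial is indeed nonzero (which follows from $i \neq j$ together with the injectivity of the bit-block-to-coefficient encoding described in Section~\ref{ss:poly_fwt}) and that we are working over a genuine field so that the ``at most $d$ roots'' fact applies --- both are immediate in our setup since $F_{2^{\ell}}$ is a field and the encoding is a bijection between $\{0,\ldots,2^L - 1\}$ and the polynomials of degree at most $d$ over $F_{2^{\ell}}$.
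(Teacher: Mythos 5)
Your proof is correct and matches the paper's own argument: the paper likewise bounds collisions by observing that any colliding evaluation point is a root of the difference polynomial, which has degree at most $d$ and hence at most $d$ roots in $F_{2^{\ell}}$. Your added remark that the difference polynomial is nonzero when $i \neq j$ is a detail the paper leaves implicit, but it does not change the route.
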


It is easy to see that appropriate binary Walsh transforms allow us to
detect singletons and delete them from the next rounds. Appropriate
binary Walsh transforms also easily provide the expansion.

\section{The New Length Reduction Technique for the FFT}\label{s:new_fft}

\subsection{Sparse Vector of Polynomial Length}\label{s:length}

The proposed technique deals with the case that $N$ is polynomial
in $n$, thus the indices are bounded by $n^c$. In the case
where, $N$ is exponential in $n$, the reduction to a polynomial
case can be used.

{\bf Main Idea:} Derive a set of unique polynomials
from each non-zero index in $T$, and one polynomial for each non-zero
in $P$. Each assignment for the polynomials in $\F_q$, where $q$ is a
prime number of size $\Theta (n)$ will give a different mapping of
the non-zeros in $T$ and in $P$ to vectors of size $q$. The
convolution will be performed between the vectors obtained from
$T$ and $P$ under the same assignments.

The first step of the algorithm is to choose a prime number of size
$\Theta (n)$, and create a polynomial for each non-zero index in
$T$. The created polynomial of index $i$ will be denoted as the {\em
base polynomial} of $T[i]$. The creation of the polynomial is done by
representing the index as a number in base $(q-1) \over 2$. Each
digit is interpreted as a coefficient of the polynomial.

\begin{example}
Let $q=13$. Consider index $95$ in base $10$. This equals $235$ in
base ${(13-1) \over 2}=6$. Each digit is a coefficient, thus the
polynomial representing index $95$ is $2X^2+3X+5$.
\end{example}

Since the indices in $T$ are bounded by $n^c$, and $q$ is $\Theta
(n)$, then the degree of the polynomials which created in this step
is bounded by $c$. In the next step, from each polynomial we create
$2^c$ polynomials. This is done by giving two choices for each
coefficient of the polynomial:
(1) Leave it as is. (2) Add $(q-1) \over 2$ to the coefficient and
decrease by 1 the coefficient of the higher degree. We do this for all
the coefficients of the polynomial except for the coefficient of the
highest degree.

\begin{example}
Suppose we have a non-zero index $95$, using $q=13$ we get the base
polynomial $2X^2+3X+5$. After the second step we will obtain $4$
polynomials: $2X^2+3X+5$, $2X^2+2X+11$,$X^2+9X+5$,$X^2+8X+11$.\\
The first polynomial is the base polynomial. The second polynomial was
obtained by adding $6$ to the first coefficient and decreasing the
second coefficient by one. The 3rd and the 4th polynomials were created
by adding $6$ to the second coefficient of the first and second
polynomials respectively, and decreasing the third coefficient by one.
\end{example}

The duplication of the polynomials was made to meet the distance
preserving requirement of the length reduction. The following lemma
formalizes it.

\begin{lemma}\label{l:PolAllignment}
For any assignment of $X$, if $P[0]$ is aligned with the base
polynomial representing $T[i]$, then $P[j]$ will be aligned with
one of the polynomials representing $T[i+j]$.
\end{lemma}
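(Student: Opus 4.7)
The plan is to promote the statement to a polynomial identity in $\F_q[X]$ and to identify which of the $2^c$ variant polynomials associated with $T[i+j]$ is equal to the sum of the base polynomial for $T[i]$ and the polynomial for $P[j]$. Because the polynomial attached to $P[0]$ is the zero polynomial, the hypothesis that $P[0]$ is aligned with the base polynomial of $T[i]$ only fixes the origin of the reduced convolution, and the conclusion reduces, for an arbitrary assignment $X=r$, to the existence of a variant polynomial for $T[i+j]$ whose evaluation at $r$ equals the sum of the evaluations at $r$ of the base polynomials for $T[i]$ and $P[j]$. Establishing the corresponding identity at the polynomial level handles every $r$ uniformly, which is what ``for any assignment of $X$'' asks for.

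To pinpoint the correct variant I would set $b=(q-1)/2$ and write the base-$b$ digits of $i$, $j$, and $i+j$ as $d_k,e_k,f_k\in[0,b)$ respectively, along with the carry bits $c_k\in\{0,1\}$ generated when $i$ and $j$ are added digit by digit in base $b$. The schoolbook identity
$$d_k+e_k+c_k=f_k+b\,c_{k+1}$$
then governs the coefficients at every position. Now apply the borrow operation of Section~\ref{s:length} to the base polynomial of $T[i+j]$ at exactly those positions $k$ for which $c_{k+1}=1$: by construction the resulting variant has $k$-th coefficient $f_k+b\cdot c_{k+1}-c_k$, and the displayed identity rewrites this as $d_k+e_k$, which is precisely the $k$-th coefficient of the sum of the base polynomials for $T[i]$ and $P[j]$. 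Hence the two polynomials agree coefficient-wise in $\F_q[X]$.

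The subtle point, which I expect to be the main obstacle in a careful write-up, is the boundary check: the construction explicitly forbids borrowing from the top coefficient, so the carry set $S=\{k:c_{k+1}=1\}$ must be contained in $\{0,1,\dots,c-1\}$, i.e.\ no carry may escape the highest position. This follows from the degree bound built into the construction: each index is at most $n^c$ and $b=\Theta(n)$, so the base-$b$ expansion of $i+j$ still fits in a polynomial of degree at most $c$, forcing $c_{c+1}=0$. Once this book-keeping between carries in base-$b$ addition and the borrow-variant construction is in hand, the desired polynomial identity, and thus Lemma~\ref{l:PolAllignment}, follow immediately.
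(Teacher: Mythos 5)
Your proof is correct and follows essentially the same route as the paper's: both track the digit-by-digit addition of $i$ and $j$ in base $(q-1)/2$ and observe that the $2^c$ borrow-variants of the base polynomial of $T[i+j]$ absorb every possible carry pattern, so one of them coincides with $A+B$. Your write-up is in fact tighter than the paper's, since you name the exact variant (borrow precisely at the positions where a carry occurs) and explicitly check the boundary condition $c_{c+1}=0$ that the paper leaves implicit.
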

\begin{proof}
Let $q$ be the chosen prime number. Index $0$ in $P$ is
represented by the polynomial $0$, and index $j$ in $P$ is
represented by the a polynomial $A=a_cX^c+a_{c-1}X^{c-1}+...+a_0$.
Index $i$ in $T$ is represented by a polynomial of the form
$B=b_cX^c+b_{c-1}X^{c-1}+...+b_0$, and index $i+j$ in $T$ is
represented by a polynomial $D=d_cX^c+d_{c-1}X^{c-1}+...+d_0$. Note
that the coefficients $a_i$ and $b_i$ are smaller then $(q-1) \over
2$.

Clearly, if $P[0]$ is aligned with $T[i]$, then for any
assignment of $X$, $P[j]$ will be aligned with the polynomial
$A+B=(a_c+b_c)X^c+(a_{c-1}+b_{c-1})X^{c-1}+...+(a_0+b_0)$. Now consider
the first coefficient of $D$. Since $a_0$ and $b_0$ are
smaller then $(q-1) \over 2$, then there are only two cases: (1)
$(a_0+b_0)<{(q-1) \over 2}$, thus $d_0=a_0+b_0$. (2)
$(a_0+b_0)>={(q-1) \over 2}$, thus $d_0=a_0+b_0-{(q-1) \over 2}$
which is covered by the
polynomial where $(q-1) \over 2$ was added to the first coefficient.\\
In the later case, one was added to the second coefficient, thus we
decrease the next coefficient whenever we add $(q-1) \over 2$ to the
current coefficient. The same cases exist also in all the
coefficients, but a polynomial was created for each possible case ($2^c$
cases), thus one of the created polynomials will be equal to the
polynomial $A+B$. $\QED$
\end{proof}

Note that all the $2^c \times n$ created polynomials are unique, and
in $\F_q$. Assigning a value to the polynomials in $\F_q$ will give a
vector of size $q$.

\begin{lemma}\label{l:maxRoots}
Any two polynomials can be mapped to the same location in at most
$c$ assignments.
\end{lemma}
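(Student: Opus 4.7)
The plan is to argue that a collision between two distinct polynomials at some evaluation point $r \in \F_q$ is exactly the statement that $r$ is a root of their difference, and then invoke the standard fact that a nonzero polynomial over a field of degree at most $c$ has at most $c$ roots.

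More concretely, I would set up the proof as follows. Let $P_1(X), P_2(X)$ be two of the polynomials produced by the construction of Section~\ref{s:length}. By the construction, each $P_i$ lies in $\F_q[X]$ and has degree at most $c$, since indices in $T$ (and the derived variants) are bounded by $n^c$ and each is represented in base $(q-1)/2$, yielding at most $c+1$ coefficients. A ``collision'' under assignment $X = r$ means that the two polynomials are mapped to the same location in the length-$q$ vector, i.e.\ $P_1(r) = P_2(r)$ in $\F_q$. Equivalently, $r$ is a root of $Q(X) := P_1(X) - P_2(X)$.

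Assuming $P_1 \ne P_2$ (the lemma is vacuous otherwise), the polynomial $Q$ is a nonzero element of $\F_q[X]$ with $\deg Q \le c$. Since $\F_q$ is a field, a nonzero polynomial of degree at most $c$ has at most $c$ roots in $\F_q$. Hence there are at most $c$ values of $r \in \F_q$ for which $P_1$ and $P_2$ collide, proving the lemma.

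I do not anticipate any real obstacle here: the only subtlety worth double-checking is that the construction in Section~\ref{s:length} indeed yields distinct polynomials (which the paper explicitly notes right before the lemma, stating that all $2^c \times n$ polynomials are unique), so that $Q \ne 0$ whenever we are comparing two genuinely different polynomials. Everything else is an immediate invocation of the root-counting bound in $\F_q[X]$.
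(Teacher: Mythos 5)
Your proof is correct and follows essentially the same route as the paper's: form the difference polynomial, observe its degree is at most $c$, and note that a collision at $r$ means $r$ is a root, so there are at most $c$ such assignments. Your explicit check that the two polynomials are distinct (so the difference is nonzero) is a small point the paper leaves implicit but does rely on.
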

\begin{proof}
The distance between any two polynomials gives a polynomial, where the
degree of the difference polynomial is bounded by $c$. Since both
polynomials give the same index under the selected assignment, then the
assigned value is a root of the difference polynomial. The degree of
this polynomial is bounded by $c$, thus it can have at most $c$
different roots in $\F_q$. $\QED$
\end{proof}

Since any polynomial can be mapped into the same location with at most
$2^c \times n-1$ other polynomials, and with each of them at most $c$
times, due to Lemma \ref{l:maxRoots}, then we get the following
Corollary:
\begin{corollary}\label{c:maxPolMultiples}
Any polynomial can appear as a {\it multiple} in not more then $c
\times 2^c \times n$ vectors.
\end{corollary}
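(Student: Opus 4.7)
The plan is to establish Corollary~\ref{c:maxPolMultiples} by a straightforward union-bound argument on top of Lemma~\ref{l:maxRoots}. Fix any one of the $2^c \times n$ polynomials, call it $p$. The polynomial $p$ appears as a \emph{multiple} in the vector indexed by an assignment $r \in \F_q$ precisely when there exists some other polynomial $p' \neq p$ in our family that shares the same evaluation at $r$, i.e.\ $p(r) = p'(r)$.

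First I would count, for each fixed competitor $p'$, the number of bad assignments: by Lemma~\ref{l:maxRoots}, the polynomial $p - p'$ has degree at most $c$, and so it has at most $c$ roots in $\F_q$. Hence there are at most $c$ assignments $r$ at which $p$ and $p'$ collide. Next I would sum over all other polynomials in the family. Since there are $2^c \times n - 1$ polynomials other than $p$, the set of assignments for which $p$ is a multiple is contained in the union of the collision sets for each competitor, giving a total of at most $c \cdot (2^c \times n - 1) \leq c \times 2^c \times n$ assignments, which is exactly the claimed bound.

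The argument has no real obstacle; the only substantive ingredient is Lemma~\ref{l:maxRoots}, and the rest is a union bound. The one subtlety worth explicitly flagging in the write-up is that $p$ being a multiple under assignment $r$ is equivalent to the existence of \emph{at least one} collision partner at $r$, so it suffices to upper-bound the union of pairwise collision events rather than count multiplicities — each problematic assignment contributes only once to the final count regardless of how many other polynomials happen to land at the same location.
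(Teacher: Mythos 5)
Your proof is correct and follows essentially the same route as the paper: the paper's own justification is exactly the observation that a fixed polynomial has at most $2^c \times n - 1$ potential collision partners, each contributing at most $c$ bad assignments by Lemma~\ref{l:maxRoots}, giving the bound $c \times 2^c \times n$ by a union bound. Your explicit remark that a ``multiple'' event only requires at least one collision partner (so the union bound is a valid over-count) is a welcome clarification but does not change the argument.
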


The last step of the length reduction algorithm is to find a set of
$O(\log n)$ assignments which will ensure that each polynomial will
appear as a singleton at least once.

The selection of the $O(\log n)$ assignments is done as follows:
Construct table $A$ with $2^c \times n$ columns and $c \times
2^{c+1} \times n$ rows. Row $i$ correspond to an assigned value
$a_i$ and the corresponding reduced length vector $T'_{i}$. A
column corresponds to a polynomial $P_j$. The value of $A_{ij}$ is set
to $1$ if polynomial $j$ appears as a {\it singleton} in vector
$T'_{i}$. Due to Corollary \ref{c:maxPolMultiples}, the number of
zeros in each column can not exceed $c \times 2^c \times n$. Thus,
in each column there are $1$'s in at least half of the rows, which
means that the table is at least half full. Since the table is at
least half full there exists a row in which there is one in at least
half of the columns. The assignment value which generated this row
is chosen, and all the columns where there was a $1$ in the selected
row are deleted from the table.

Recursively another assignment value is chosen and the table size is
halved again, until all the columns are deleted. Since at each step
at least half of the columns are deleted, the number of prime number
chosen can not exceed  $\log (2^c \times n) = c \log n$.

{\bf Time:} Creating vector $T'_{i}$ (row $i$) takes $O(n)$ time.
Since we start with a full matrix of $O(n)$ rows then the
initialization takes $O(n^2)$ time. Choosing the $O(\log n)$
assignment values is done recursively. The recurrence is:
$$ t(n^2) = n^2 + t({n^2\over 2})$$
The closed form of this recurrence is $O(n^2)$.

Note that if we choose the assignments randomly in running time we can
skip the preprocessing step. In the expected case we will succeed
after $O(\log n)$ assignments. But even in the worst case $O(n)$
assignments will guarantee success.

\subsection{Sparse Vector of Exponential Length}\label{s:Exp}

In this case, as proposed in ~\cite{LR07}, each of the vectors $T$
and $P$ is reduced into a single vector of size $O(n^4)$, where
all the non-zeros appear as singletons. The reduction is performed
using the modulus function with a prime number $q$ of size
$O(n^4)$. It was already proven there that there are at most
$n^3$ prime number of size $O(n^4)$, which generate at least one
multiple. Thus, by testing $n^3+1$ prime numbers we ensure that at
least one of them produces a vector with no multiples.

In order to find such a prime number, we find $n^3+1$ prime numbers of
size $O(n^4)$. Then we multiply all the prime numbers to receive a
large number $Q$. In addition we have at most $n^2$ different
distances between any two non-zeros. We multiply all of them to
receive a large number $D$. The next step is to find the greatest
common divisor ($GCD$) between $Q$ and $D$. Since there is at least
one prime number in $Q$ which does not divide $D$, then $GCD(Q,D)$
is less then $Q$. Dividing $Q$ by the $GCD(Q,D)$ will give $R$ which
is the multiplication of all the prime numbers that create only
singletons. The last step is to find at least one of them. This is
done using a binary search on the prime numbers. We take the
multiplication of half of the prime numbers $Q'$, and find the
$GCD(Q',R)$. If $GCD(Q',R) > 1$ we continue with this set of prime
numbers and multiply half of them iteratively. Otherwise, we
continue with the other half of the prime numbers. After $O(\log
n)$ iterations we will find one prime number which will generate
only singletons.

The algorithm appears in detail below.

\fbox{
\begin{minipage}{16cm}
{\bf Algorithm  -- $N$ is exponential in $n$} {\sf
\begin{enumerate}
    \item Find $n^3+1$ prime numbers of size $O(n^4)$.
    \item Multiply all the prime numbers to obtain $Q$.
    \item Multiply all the difference between any two non-zero indices
      to obtain $D$.
    \item Set $R={Q \over GCD(Q,D)}$.
    \item Let $S$ be the set of all prime numbers.
    \item While the size of $S$ is larger then $1$ do:
\begin{enumerate}
    \item Let $S'$ be a set of the first half of prime numbers in $S$.
    \item Set $Q'$ to be the multiplication of all the prime numbers in $S'$.
    \item If $GCD(Q',R)>1$ then set $S=S'$, otherwise set $S=S/S'$.
\end{enumerate}
\end{enumerate}
{\bf end Algorithm} }
\end{minipage}
}

{\bf Correctness:} Immediately follows from the discussion.

{\bf Time:} Step 1 is performed in time $O(n^3 {\rm polylog}(n))$
using the primality testing described in ~\cite{berri:02}. Step 2 is
done by building a binary tree of products where each node
contain the product of the two number in the lower level. This
tree has $O(\log n)$ levels. In the leaves there are $n^3$ prime
numbers with $\log n$ bits, so the total number of bits in each
level is $O(n^3 \log n)$. A product of two numbers can be
computed in time $O(b \log b \log \log b)$ ~\cite{SS-71}, where $b$ is
the number of bits. Thus each level can be computed in time $O(n^3
{\rm polylog} (n))$ and the total time for step 2 is $O(n^3
{\rm polylog}(n))$. step 3 is preformed in the same way, but this time
in the leaves there are $n^2$ numbers with $n$ bits, thus each
level has $n^3$ bits and the time for this step is $O(n^3 \log
n)$. In step 4 we calculate the $GCD$ of two numbers with $O(n^3
\log n)$ bits. This can be done in time $O(n^3 {\rm polylog}(n))$
using ~\cite{SZ:02}. The calculation for step 6(b) was
already performed in step 2, and step 6(c) can be calculated in time
$O(n^3 {\rm polylog}(n))$, thus the time of step 6 is $O(n^3
{\rm polylog}(n))$. Following this discussion the total time of this
algorithm is $O(n^3 {\rm polylog}(n))$.

\section{Conclusion and Open Problems}\label{s:conc}

A new tool for Length Reduction and Sparse Convolution was introduced
in this paper - encoding the indices as polynomials. This enabled the
first algorithm for efficient calculation of sparse binary discrete
Walsh transform. It also provided better and faster algorithms for
several well known problems, such as randomized efficient sparse FFT
and deterministic efficient sparse FFT computation.

An important problem remains: Can the Length Reduction and Sparse
DFT problems be solved deterministically without the need of the
preprocessing step or, alternately, can the preprocessing time be
reduced from quadratic?

\bibliographystyle{plain}
\small{
\bibliography{/home/amir/Tex/paper}
%\bibliography{paper}
}

\end{document}